\documentclass[12pt]{iopart}

\usepackage{iopams}
\newtheorem{dfn}{Definition}

\newtheorem{rem}{Remark}
\newtheorem{prop}{Proposition}
  
\begin{document}

\title[ Lie-Hamilton System and QHJ equation ]{On some signatures of Lie-Hamilton System in Quantum Hamilton Jacobi Equation }

\author{Arindam Chakraborty}

\address{Physics Department,  Heritage Institute Of Technology\\
	Chowbaga Road, Anandapur, Mundapara, Kolkata-700107
	Tel.: 03366270502\\
Kolkata, West Bengal,
India\\
}
\ead{arindam.chakraborty@heritageit.edu}
\vspace{10pt}
\begin{indented}
\item[]February 2026
\end{indented}

\begin{abstract}
The general forms of Quantum Hamilton Jacobi Equation for a particle of constant mass,  position dependent effective mass and non-Hermitian Effective mass Swanson model have been considered. It has been found that the said equations can be recast in the form of Cayley-Klein Riccati equations which admit a Lie-Hamilton structure. The possible expressions of Lie symmetry and Lie Integral have  also been considered.
\end{abstract}

%
\noindent{\it Keywords}: Lie-Hamilton, Quantum Hamilton-Jacobi, non-Hermitian, Lie symmetry, Lie integral, Riccati Equation, symplectic, Poisson structure.
%
%
%
%

\section{Introduction}
The present work attempts to connect \textbf{Quantum Hamilton Jacobi (QHJ) equation} under various contexts with the so called \textbf{Lie Hamiltonian System (LHS)}, the two hitherto uncorrelated domains, via \textbf{Cayley-Klein Riccati (CKR) equation}. QHJ method \cite{kapoor22} has been developed as one of the consequences of numerous attempts to show quantum theory as an extension of theories of Classical Mechanics and provides an alternative method of finding values of measurable quantities independently of Schr\"odinger's spectral problem. Though the initial work on QHJ theory was envisaged by Dirac, Jordan, Schwinger and many other with notions like \textbf{Quantum Action} and \textbf{Quantum Momentum Function (QMF)}, the relatively recent comprehensive study on QHJ method by employing \textbf{action-angle technique} and complex analysis was suggested by Leacock and Padgett \cite{leacock83, padgett83}. From then on, with its own share of successes and limitations like many other formalisms \cite{daniel01} of quantum mechanics QHJ method has become able to engender a distinct domain of enquiry by solving problems starting from prototypical one like harmonic oscillator, particle-in-a-box \cite{kapoor22} Coulomb scattering \cite{rowe85} to periodic potential \cite{ranjani04, ranjani05, kapoor22}, position dependent effective mass system \cite{yes10} super-symmetric system \cite{bal01, ranjani13}, multi-particle system \cite{djama18}, quantum vortices \cite{chou08} and even to the recent surge of interest in non-hermitian \cite{yes07} and $\mathcal{PT}$-symmetric systems \cite{yes07, ranjani2005} and many other issues of interest \cite{bracken20, girard22, ture25}. 

On the other hand, the study of \textbf{Lie Systems}, often referred to as \textbf{Lie-Scheffers Systems}, as first order non-autonomous systems of differential equations dates back to the closing decades of 19th century when the proof of the so called Lie-Scheffers theorem elaborated the criterion of  local superposition rule in terms of \textbf{Vessiot Guldberg Lie Algebra (VGLA)} \cite{delucas20}. Among many such Lie Systems the so called \textbf{Lie-Hamilton Systems} are of particular interest since for each of them the corresponding VGLA is composed of \textbf{Hamiltonian Vector Fields} with respect to a \textbf{Poisson Structure} equipped with atleast a local \textbf{Symplectic Form} \cite{west78, vais94}. This fact imparts a geometrical meaning to the system in terms of Hamiltonian Functions obtained from the contraction of relevant symplectic form by VGLA vector fields in one hand and relates each of the vector fields  with its respective Hamiltonian function via \textbf{Poisson Bi-vector} on the other. The said Hamiltonian functions themselves constitute a Lie algebra with respect to the Poisson Bracket induced by the related symplectic form and this algebra is isomorphic to the VGLA under consideration. Considerable attention has been given through a number of studies on LHS like Smorodinsky-Winternitz oscillator \cite{wint67}, Kummer-Schwarz equation \cite{berk07, delucas12}, Ermakov system \cite{cari08, delucas11}, Lotka-Volterra population dynamics \cite{jin05} etc. Several recent advancements concerning superposition rule \cite{ball13}, planner Riccati Equation \cite{ball15}, Poisson-Hopf algebra deformation \cite{ball18},  contact Lie system \cite{delucas23}, Schwarzian systems \cite{cari13}, $k$-symplectic systems \cite{vila15}, multi-symplectic systems \cite{gracia22} have already been done regarding the theoretical development of the subject and its close connection with geometric aspects of differential equations and geometric mechanics.

The present article earns its motivation from two recent pioneering works, one by Kapoor et. al. \cite{kapoor22} regarding QHJ theory and the other one from de Lucas et. al. \cite{delucas20} regarding LHS. We want to observe the so called QHJ equation purely from the perspective of \textbf{Non-Linear Ordinary Differential Equation (NLODE)} where the QMF resembles  an NLODE with complex coefficients in general and hence its solution becomes a complex function of position $x$. We consider three such cases : (I) a single particle system with constant mass in an $x$-dependent potential, (II) a system with position dependent effective mass in an $x$-dependent potential and (III) non-Hermitian effective mass Swanson model. It has been shown that in each case it is possible to write the corresponding QHJ equation 
as a system of coupled NLODE's involving real dependent and independent variables. This system of equations resembles the so called CKR system of equations in a plane. Our present purpose is not to investigate energy eigenvalues and other quantum mechanical aspects but to identify inherent  geometrical structure of systems of  Lie-Hamilton type and aspects like \textbf{Lie Symmetry} or \textbf{Lie Integral}. In section-\ref{constmass} we construct the CKR system for each of the above mentioned cases to identify the $x$ dependent coefficients $\{a_j : j=1, 2, 3\}$. In section-\ref{lh} we briefly discuss the basics of LHS and the way present systems can be understood in terms of some basic signatures of LHS like Poisson structure, symplectic form, Hamiltonian functions and Poisson bi-vector etc. This section concludes with the construction of possible \textbf{Lie Symmetry}  and \textbf{Lie Integral} \cite{espinoza11} under special conditions for each of the above three cases.

\section{ Quantum Hamilton Jacobi (QHJ) Equation and Cayley-Klein Riccati (CKR) System}\label{constmass}
\textbf{Case-I : Constant Mass}

In conventional quantum mechanics of a particle of constant mass moving in a potential $V(x)$, the spectral problem is given by the time-independent Schr\"odinger equation
\begin{eqnarray}\label{quantum1}
H\Psi=E\Psi\:\:{\rm{or}}\:\:(T+V(x))\Psi=E\Psi\nonumber\\
-\frac{\hbar^2}{2m}\frac{d^2\Psi(x)}{dx^2}+V(x)\Psi(x)=E\Psi(x)
\end{eqnarray}
where $T$ is the kinetic energy operator and $E$ being the eigenvalue. In Quantum Hamilton-Jacobi (QHJ) framework the so called \textbf{quantum action} $S\dot{=}S(x, E)$ is defined \cite{kapoor22} as
\begin{equation}
\Psi=e^{iS(x, E)/\hbar},\:\:\:\:\:\:\:\:\:\:S(x, E)=-i\hbar\ln\Psi,\:\:\:\:\:\:\:\:i=\sqrt{-1}
\end{equation}
Therefore in terms of $S$, equation-\ref{quantum1} becomes
\begin{equation}\label{quantum2}
\frac{1}{2m}\left(\frac{dS}{dx}\right)^2-\frac{i\hbar}{2m}\frac{d^2S}{dx^2}+V(x)-E=0
\end{equation}
It is to be noted that as $\hbar\rightarrow 0$, equation-\ref{quantum2} becomes the space part of the separated Hamilton-Jacobi equation usually mentioned in classical mechanics \cite{moretti23}. Equation-\ref{quantum2} is called \textbf{Quantum Hamilton Jacobi (QHJ}) Equation viewed as the quantum deformation of the classical Hamilton Jacobi Equation. The quantity 
\begin{equation}
\wp(x, E)=\frac{dS}{dx}=\frac{-i\hbar}{\Psi}\frac{d\Psi}{ dx}
\end{equation}
is defined as \textbf{Quantum Momentum Function (QMF)} which is a complex function for obvious reason and in view of equation-\ref{quantum2} it satisfies the following \textbf{Riccati Equation}
\begin{equation}\label{qhj}
\frac{\wp^2}{2m}-\frac{i\hbar}{2m}\wp^\prime+V(x)-E=0. 
\end{equation}
A single 'prime' over any variable implies first order derivative with respect to $x$.

In conventional QHJ method QMF is considered a complex function and expanded in Laurrent series. The residue at a moving pole is substituted in the action integral to achieve energy quantization. The detail calculations are given in \cite{kapoor22}

In the present discussion instead of quantization we  investigate an underlying Lie Hamiltonian structure corresponding to a system involving a single particle of constant mass $m$. Inorder to make things amenable to CKR  equation  we employ the following transformation 
\begin{eqnarray}
p(x)=\frac{\alpha(x)\wp(x)+\beta(x)}{\gamma(x)\wp(x)+\delta(x)}\:\:\:{\rm{with}}\:\;\:\alpha\delta-\beta\gamma\neq 0
\end{eqnarray}
and choose $\gamma=0$ and $\beta(x)=i\sigma(x)$. Here, $\alpha, \delta, \sigma$ are all real functions. Considering $\hbar=1$, equation-\ref{qhj} becomes
\begin{equation}\label{ckr}
-ip^\prime-a_1(x)+ia_2(x)p+a_3(x) p^2=0
\end{equation} 
where,
\begin{eqnarray}
a_1(x)=W(\sigma, \alpha)(\alpha\delta)^{-1}-\alpha\delta^{-1}f(x)+\sigma^2(\alpha\delta)^{-1}\nonumber\\
a_2(x)=-(2\sigma\alpha^{-1}+W(\delta, \alpha)(\delta\alpha)^{-1})\nonumber\\
a_3(x)=1/\alpha(x)\nonumber\\
f(x)=2m(V(x)-E)
\end{eqnarray}
are real functions of $x$ and $W(u, v)=u^{\prime}v-uv^{\prime}$. Letting $p=p_1+ip_2$, and equating the real and imaginary parts of equation-\ref{ckr} we obtain the \textbf{Cayley-Klein Riccati}  (CKR) \cite{delucas20} system of equations for constant mass given by
\begin{eqnarray}\label{ckrcm}
p_1^\prime = a_2(x)p_1+2a_3(x)p_1p_2\nonumber\\
p_2^\prime=a_1(x)+a_2(x)p_2+a_3(x)(p_2^2-p_1^2)
\end{eqnarray}

\begin{rem}
Equation-\ref{ckrcm} cannot be obtained for a complex potential of the form $V(x)=V_1+iV_2$, where, $V_1$ and $V_2$ are the real functions and consequently the Hamiltonian becomes non-Hermitian. We shall however consider a typical non-Hermitian Hamiltonian in \textbf{Case-III}.
\end{rem}

\vspace{1cm}

\textbf{Case-II : Position Dependent Effective  Mass}

 For a particle whose mass is depending upon the position we consider \textbf{Quantum Effective Mass Hamilton Jacobi (QEMHJ) equation }. \textbf{Psition Dependent Mass Schr\"odinger (PDMS)} equation has appeared in the conventional discourse of quantum mechanics since the work of von Roos \cite{roos83} in the context of semiconductor theory with more than one  suggestions regarding the form of Schr\"odinger equation in Cartesian coordinates. Later, the same idea was persuaded in curved spaces \cite{quesne04} and applied in the contexts of hard-core potential \cite{dong05}, non-Hermitian potentials \cite{jiang05}, isospectrality \cite{mustafa08}, Quantum Hamilton-Jacobi theory \cite{yes10}, spin-orbit interaction \cite{schmidt15}, equivalence/duality between self-adjoint and non-selfadjoint Hamiltonian \cite{rego16}, quasi-exact solvability and super-symmetry \cite{quesne18}, exponential type mass dependence \cite{dong22}, complex mass \cite{sarathi25} etc.

Following von Roos \cite{roos83} and Ye\c{s}ilta\c{s} \cite{yes10}, the general form of QEMHJ equation for a particle of mass $m\dot{=}m(x)$ in one dimension is given by
\begin{equation}
H\Psi=(T+V(x))\Psi=E\Psi 
\end{equation}
where kinetic energy operator
\begin{equation}
T=-\frac{\hbar^2}{4}\left[m^a(x)\frac{d}{d x}m^b(x)\frac{d}{\partial x}m^c(x)+m^c(x)\frac{d}{d x}m^b(x)\frac{d}{d x}m^a(x)\right]
\end{equation}
along with the constraint $a+b+c=-1$. Taking the constraint into account and letting $m(x)=m_0M(x)$ with $\hbar=2m_0=1$ the relevant Hamiltonian becomes \cite{yes10}
\begin{equation}
H=-\frac{1}{M(x)}\frac{d^2}{dx^2}+\frac{M^{\prime}(x)}{M^2(x)}\frac{d}{dx}+V_{\rm{eff}}
\end{equation}
where, 
\begin{equation}
V_{\rm{eff}}=V(x)+\frac{b+1}{2}\frac{M^{\prime\prime}}{M^2}-(1+b-ac)\frac{(M^\prime)^2}{M^3}
\end{equation}
 
Considering $\Psi(x)$ to be an eigenfunction of $H$ with the eigenvalue $E$, the time-independent Schr\"odinger equation becomes
\begin{eqnarray}\label{eigen1}
-\frac{1}{M(x)}\Psi^{\prime\prime}+\frac{M^{\prime}(x)}{M^2(x)}\Psi^{\prime}+V_{\rm{eff}}\Psi=E\Psi\nonumber\\
{\rm{or}}-\frac{1}{M(x)}\frac{\Psi^{\prime\prime}}{\Psi}+\frac{M^{\prime}(x)}{M^2(x)}\frac{\Psi^{\prime}}{\Psi}+V_{\rm{eff}}-E=0
\end{eqnarray}
Introducing the so called \textbf{Quantum Momentum Function} $\wp(x, E)=-i\frac{\Psi^{\prime}}{\Psi}$, equation-\ref{eigen1} becomes
\begin{eqnarray}\label{qhj1}
-i\wp^{\prime}(x, E)+\wp^2(x, E)+i\frac{M^\prime(x)}{M(x)}\wp(x, E)+M(x)(V_{\rm{eff}}(x)-E)=0
\end{eqnarray}
The above equation represents the \textbf{Quantum Hamilton Jacobi Equation} for a system with spatially varying mass. Writing $\wp=\wp_{\rm{re}}+i\wp_{\rm{im}}=p_1+ip_2$ in equation-\ref{qhj1} and separating the real and imaginary parts we get the following pair of coupled non-linear differential equations
\begin{eqnarray}\label{ckrvm}
{\rm{(a)}}\:\: p_1^\prime=2a_3(x)p_1p_2+a_2(x)p_1\nonumber\\
{\rm{(b)}}\:\: p_2^\prime=a_1(x)+a_2(x)p_2+a_3(x)(p_2^2-p_1^2)
\end{eqnarray}
Where, comparing equation-\ref{ckrvm} with equation-\ref{ckrcm}
\begin{eqnarray}\label{coeff1}
a_1(x)=M(x)(E-V_{\rm{eff}})\nonumber\\
a_2(x)=\frac{M^\prime(x)}{M(x)}\nonumber\\
a_3(x)=1
\end{eqnarray}

Equation-\ref{ckrvm} can be considered as \textbf{Cayley-Klein Riccati} system of equations  for system with variable mass {\footnote{Case-I and Case-II shows a kind of equivalence between a system of constant mass and that of effective mass in so far as CKR equation is concerned}}.

\vspace{1cm}

 {\textbf{Case-III : Non-Hermitian Effective Mass Swanson Model}}

The well studied Non-Hermitian Swanson Hamiltonian \cite{yes13, bagchi15, fring16, moha18, rebo22} is given by
\begin{eqnarray}\label{swanson}
H=\nu_0(a^\dagger a+1/2)+\nu_1 a^2+\nu_2 (a^\dagger)^2+\nu_3a+\nu_4a^\dagger
\end{eqnarray}
For spatially varying mass, it is suggestive to consider the generalized ladder operators $\{a, a^\dagger\}$ defined as \cite{bagchi08}
\begin{eqnarray}
a=\alpha_1(x)\frac{d}{dx}+\alpha_2(x)\:\:\:{\rm{and}}\:\:\:
a^\dagger = -\alpha_1(x)\frac{d}{dx}+\alpha_2(x)-\frac{d\alpha_1(x)}{dx}
\end{eqnarray}
respectively, $\{\nu_j : j=0, 1, 2 ,3\}\in \mathbb{R}$ and $\alpha_1, \alpha_2$ are real functions.
Using the above result the Hamiltonian in equation-\ref{swanson} becomes
\begin{equation}
H=-\tilde{\nu}_0\frac{d}{dx}\alpha_1^2\frac{d}{dx}+k_1(x)\frac{d}{dx}+k_2(x)
\end{equation}
where,
\begin{eqnarray}
\tilde{\nu}_0=\nu_0-\nu_1-\nu_2\nonumber\\
k_1(x)=(\nu_1-\nu_2)\alpha_1(x)(2\alpha_2(x)-\alpha_1^\prime(x))+(\nu_3-\nu_4)\alpha_1(x)\nonumber\\
k_2(x)=(\nu_0+\nu_1+\nu_2)\alpha_2^2(x)-(\nu_0+2\nu_2)\alpha_1^\prime(x)\alpha_2(x)\nonumber\\
-(\nu_0-\nu_1+\nu_2)\alpha_1(x)\alpha_2^\prime(x)+\nu_2(\alpha_1(x)\alpha_1^\prime+(\alpha_1^\prime)^2(x))\nonumber\\
+(\nu_3+\nu_4)\alpha_2(x)-\nu_4\alpha_1^\prime(x)+\frac{\nu_0}{2}
\end{eqnarray}
Using the same definition of QMF in Schr\"odinger equation $H\Psi={E}\Psi$ we get
\begin{equation}
-i\wp^\prime(x)-a_1(x)+ia_2(x)\wp+a_3(x)\wp^2=0
\end{equation}
with
\begin{eqnarray}
a_1(x)=\frac{E-k_2}{\tilde{\nu_0}\alpha_1^2}\nonumber\\
a_2(x)=\frac{k_1-2\tilde{\nu}_0\alpha_1\alpha_1^\prime}{\tilde{\nu_0}\alpha_1^2}\nonumber\\
a_3(x)=1
\end{eqnarray}
leading to the similar system of CKR equations as mentioned in equation-\ref{ckrvm}.

\section{\textbf{ Lie-Hamilton System and QHJ equations}}\label{lh}
The idea of Lie system is closely related to the existence of so called superposition rule for a system of first order ordinary differential equations. Let $\mathcal{M}$ be a real manifold with ${\rm{dim}}\:\: \mathcal{M}=n$ and a local coordinate system $\{p_1\cdots p_n\}$. 
\begin{dfn}
A  system of first order ordinary differential equations on a manifold $\mathcal{M}$ of the form
\begin{eqnarray}\label{sup1}
\frac{dp_i}{dx}=X_i(x, \mathbf{p} ),\:\:\:\:\: i=1\cdots n
\end{eqnarray}
is said to admit a \textbf{global superposition principle} if there exists an $x$-independent map $\Omega : \mathcal{M}^m\times \mathcal{M}\rightarrow \mathcal{M}$ such that
\begin{eqnarray}
p(x)=\Omega(p_1(x)\cdots p_m(x); \alpha)
\end{eqnarray}
where, $p_1(x)\cdots p_m(x)$ is a generic family of particular solutions to $X$ and $\alpha=\alpha_1\cdots, \alpha_n\in N$ are constants that appear due to  initial conditions of $X$.
\end{dfn}
 The so called \textbf{Lie-Scheiffer Theorem} ensures that such superposition principle exists iff
\begin{equation}\label{sup2}
X(x, p)=\sum_{\mu=1}^ra_\mu(x)\chi_\mu(p)
\end{equation} 
where, the vector fields $\{\chi_\mu : \mu=1,\cdots, r\}$ spanning an $r$-dimensional Lie algebra called \textbf{Vessiot-Guldberg Lie Algebra (VGLA)} of $X$ denoted by $V^X$.
\begin{dfn}
A \textbf{Lie System} is a system of equations mentioned in equation-\ref{sup1} along with the condition-\ref{sup2}. A Lie system isomorphic to a Lie algebra $\mathfrak{g}$ is called $\mathfrak{g}$-Lie System.
\end{dfn}

In this section we demonstrate an $\mathfrak{sl}(2, \mathbb{R})$-Lie system is possible to be constructed out of equation-\ref{ckrcm} and equation-\ref{ckrvm}.
 Let us define the following vector fields 
\begin{equation}
\chi_1=\partial_{p_2},\:\:\:\:\chi_2=p_1\partial_{p_1}+p_2\partial_{p_2}\:\:\:{\rm{and}}\:\:\:\chi_3=2p_1p_2\partial_{p_1}+(p_2^2-p_1^2)\partial_{p_2}
\end{equation}
as generators of  where, $\partial_{p_j}=\frac{\partial}{\partial p_j}$. It is easy to verify the following commutation relations :
\begin{equation}
[\chi_1, \chi_2]=\chi_1,\:\:\:[\chi_2, \chi_3]=\chi_3\:\:\:{\rm{and}}\:\:[\chi_1, \chi_3]=2\chi_2
\end{equation}
The above algebra is isomorphic to $\mathfrak{sl}(2, \mathbb{R})$ for obvious reason. It is to be mentioned that Lie-Scheffers theorem relates the dimension of the VGLA with the number of particular solutions of the superposition rule and the 
dimension of the manifold in which the Lie system is defined.

\subsection{\textbf{Poisson Structure and Lie-Hamilton System}}
\begin{dfn}
A pair $(\mathcal{M}, \omega)$ is called a \textbf{symplectic manifold} if $\omega$ is a non-degenerate closed $2$-form defined over the manifold $\mathcal{M}$
\end{dfn}

\begin{prop}
The vector fields $\{\chi_i : i=1, 2, 3\}$ called \textbf{symplectic vector fields} admit a family of \textbf{Hamiltonian functions} $\{\mathcal{H}_1, \mathcal{H}_2, \mathcal{H}_3\}=\{-p_1^{-1}, -p_2p_1^{-1}, -(p_1^2+p_2^2)p_1^{-1}\}$ relative to the symplectic form $\omega=p_1^{-2}dp_2\wedge dp_1$. The Poisson bracket $\{\mathcal{H}_i, \mathcal{H}_j\}_\omega=\omega(\chi_i, \chi_j)$ defines a \textbf{Lie-Hamilton Algebra} isomorphic to $\mathfrak{sl}(2, \mathbb{R})$
\end{prop}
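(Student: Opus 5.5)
The plan is a direct coordinate verification on the open half-plane $\mathcal{M}_0=\{(p_1,p_2)\in\mathbb{R}^2 : p_1\neq 0\}$, which is where $\omega$ and the $\mathcal{H}_i$ are defined. I will use the convention $\iota_{\chi}\omega=d\mathcal{H}$ and the Poisson bracket $\{f,g\}_\omega=\omega(X_f,X_g)$. The proof has three steps: show that $\omega$ is symplectic, show that the $\chi_i$ are Hamiltonian with the stated $\mathcal{H}_i$, and compute the brackets of the $\mathcal{H}_i$ and compare them with the commutation relations $[\chi_1,\chi_2]=\chi_1$, $[\chi_2,\chi_3]=\chi_3$, $[\chi_1,\chi_3]=2\chi_2$ established above.

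\textbf{Step 1: $\omega$ is symplectic.} On the $2$-dimensional manifold $\mathcal{M}_0$, every $2$-form is closed, since any $3$-form vanishes. Non-degeneracy holds because the coefficient $p_1^{-2}$ never vanishes on $\mathcal{M}_0$.

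\textbf{Step 2: the $\chi_i$ are Hamiltonian.} I contract each $\chi_i$ into $\omega$ and compare with $d\mathcal{H}_i$.
\begin{itemize}
\item For $\chi_1$: $\iota_{\chi_1}\omega=p_1^{-2}dp_1=d(-p_1^{-1})$.
\item For $\chi_2$: $\iota_{\chi_2}\omega=p_1^{-2}(p_2\,dp_1-p_1\,dp_2)=d(-p_2p_1^{-1})$.
\item For $\chi_3$: $\iota_{\chi_3}\omega=p_1^{-2}\big((p_2^2-p_1^2)dp_1-2p_1p_2\,dp_2\big)$. To match this with $d\mathcal{H}_3$, I rewrite $-(p_1^2+p_2^2)p_1^{-1}$ as $-p_1-p_2^2p_1^{-1}$ before differentiating.
\end{itemize}
These identities give $\iota_{\chi_i}\omega=d\mathcal{H}_i$. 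Hence each $\chi_i$ is Hamiltonian, and therefore symplectic, since $\mathcal{L}_{\chi_i}\omega=d\iota_{\chi_i}\omega=dd\mathcal{H}_i=0$.

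\textbf{Step 3: the Lie-Hamilton algebra.} I evaluate $\omega(\chi_i,\chi_j)$ directly and obtain
\[
\omega(\chi_1,\chi_2)=p_1^{-1}=-\mathcal{H}_1,\qquad
\omega(\chi_1,\chi_3)=2p_2p_1^{-1}=-2\mathcal{H}_2,\qquad
\omega(\chi_2,\chi_3)=(p_1^2+p_2^2)p_1^{-1}=-\mathcal{H}_3 .
\]
The span of $\mathcal{H}_1,\mathcal{H}_2,\mathcal{H}_3$ is therefore closed under $\{\cdot,\cdot\}_\omega$, so it is a three-dimensional Lie algebra. The functions are linearly independent, as is visible from their different dependence on $p_1$ and $p_2$.

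The linear map $\chi_i\mapsto-\mathcal{H}_i$ sends each of the three commutators to the corresponding bracket. For example, $\{-\mathcal{H}_1,-\mathcal{H}_2\}_\omega=-\mathcal{H}_1$ is the image of $[\chi_1,\chi_2]=\chi_1$, and the other two relations work the same way. So this map is a Lie algebra isomorphism onto the VGLA, which is $\mathfrak{sl}(2,\mathbb{R})$.

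\textbf{Main obstacle.} I expect the main difficulty to be sign bookkeeping rather than any conceptual point. Depending on the convention chosen ($\iota_X\omega=\pm d\mathcal{H}$, and the order of arguments in $\omega(X_f,X_g)$), the correspondence $\chi_i\mapsto\mathcal{H}_i$ is either an isomorphism or an anti-isomorphism. I will fix the convention at the outset and absorb the resulting global sign by using $\chi_i\mapsto-\mathcal{H}_i$. This is harmless for the claim, because $\mathfrak{sl}(2,\mathbb{R})$ is isomorphic to its opposite algebra.
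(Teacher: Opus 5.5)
Your proposal is correct and follows essentially the same route as the paper. Both compute $\iota_{\chi_i}\omega$ to identify $d\mathcal{H}_i$, then evaluate $\omega(\chi_i,\chi_j)$ to get $\{\mathcal{H}_1,\mathcal{H}_2\}_\omega=-\mathcal{H}_1$, $\{\mathcal{H}_2,\mathcal{H}_3\}_\omega=-\mathcal{H}_3$ and $\{\mathcal{H}_1,\mathcal{H}_3\}_\omega=-2\mathcal{H}_2$. The only differences are these: the paper \emph{derives} $\omega$ by imposing $\mathcal{L}_{\chi_i}\omega=0$ on the ansatz $F(p_1,p_2)\,dp_2\wedge dp_1$, whereas you take $\omega$ as given and get invariance from $d\iota_{\chi_i}\omega=dd\mathcal{H}_i=0$; and you spell out the isomorphism $\chi_i\mapsto-\mathcal{H}_i$ and its sign, which the paper leaves implicit.
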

\textbf{Proof :}
In order to construct the relevant Hamiltonian functions $\{\mathcal{H}_j : j=1, 2, 3\}$ we consider the $2$-form $\omega=F(p_1, p_2)dp_2\wedge dp_1$. For a VGLA the local Hamiltonians relative to $\omega$, demands the conditions that the Lie derivatives $\mathcal{L}_{\chi_i}\omega=0\:\:\forall\:\:i$. Using the homotopy relation \cite{west78} 
\begin{equation}
\mathcal{L}_{\chi_i}\omega=d\iota_{\chi_i}\omega+\iota_{\chi_i}d\omega\:\:{\rm{with}}\:\:\iota_{\chi_i}(\theta\wedge\alpha)=(\iota_{\chi_i}\theta)\wedge\alpha+(-1)^\nu\theta\wedge(\iota_{\chi_i}\alpha)
\end{equation}
where, $\nu$ is the index of the form $\theta$ and $\iota_{\chi_i}$ is the so called contraction and applying the condition $\mathcal{L}_{\chi_i}\omega=0$ we find
 \begin{eqnarray}
 \partial_{p_2}F=0, \:\:p_1\partial_{p_1}F+p_2\partial_{p_2}F+2F=0\:\:{\rm{and}}\:\nonumber\\2p_1p_2\partial_{p_1}F+(p_2^2-p_1^2)\partial_{p_2}F+4p_2F=0
 \end{eqnarray}
The first two equations give $F(p_1, p_2)=p_1^{-2}$ and hence $\omega=p_1^{-2}dp_2\wedge dp_1$ which is a closed non-degenerate $2$-form on $\mathbb{R}^2_{p_1\neq 0}=\{p_1, p_2\}\in\mathbb{R}^2(p_1\neq 0)$.

The \textbf{Hamiltonian functions} $\{\mathcal{H}_i\in C^\infty(\mathcal{M}) : i=1, 2, 3\}$, $C^\infty(\mathcal{M})$ being the function ring, are given by the relations $\{d\mathcal{H}_i=\iota_{\chi_i}\omega : i=1, 2, 3\}$. Hence, with the help of \textbf{Hamiltonian Vector Fields}, $\{\chi_i : i
=1, 2, 3\}$ we get
\begin{eqnarray}
\mathcal{H}_1=-p_1^{-1},\:\:\mathcal{H}_2=-p_2p_1^{-1},\:\:\: \mathcal{H}_3=-(p_1^2+p_2^2)p_1^{-1}
\end{eqnarray}

The Poisson Bracket relative to $\omega$ as
\begin{eqnarray}
\{\mathcal{H}_i, \mathcal{H}_j\}_\omega=\omega(\chi_i, \chi_j)=p_1^{-2}dp_2\wedge dp_1(\chi_i, \chi_j)\nonumber\\
=p_1^{-2}\{dp_2(\chi_i)dp_1(\chi_j)-dp_1(\chi_i)dp_2(\chi_j)\}
\end{eqnarray}

This gives the following relations of \textbf{Lie Hamilton algebra} $\{\mathcal{H}_1, \mathcal{H}_2, \mathcal{H}_3 : \{\cdot, \cdot\}_\omega\}$ isomorphic to $\mathfrak{sl}(2, \mathbb{R})$
\begin{eqnarray*}
\{\mathcal{H}_1, \mathcal{H}_2\}_\omega=-\mathcal{H}_1, \:\:\{\mathcal{H}_2, \mathcal{H}_3\}_\omega=-\mathcal{H}_3, \:\:\{\mathcal{H}_1, \mathcal{H}_3\}_\omega=-2\mathcal{H}_2\:\:\:\square
\end{eqnarray*}

\begin{rem}
It is to be mentioned that the bracket $\{\cdot, \cdot\}_\omega : C^\infty(\mathcal{M})\times C^\infty(\mathcal{M})\rightarrow C^\infty(\mathcal{M})$ defines a \textbf{Poisson algebra} $(C^\infty(\mathcal{M}), \cdot, \{\cdot, \cdot\}_\omega)$ as a natural consequence of the existence of a symplectic manifold (here $(\mathcal{M}, \omega)$). Every symplectic form presumes a Poisson structure but the reverse is not always true.
\end{rem}
\begin{dfn}
A Lie system is said to possess a \textbf{Lie Hamilton Structure (LHS)} if it is equipped with a manifold defined by the respective dependent variables and a space of Hamilton functions that represents a Lie algebra corresponding to a preassigned definition of Poisson bracket. A Lie system with a Lie-Hamiltonian structure defined on it is called a \textbf{Lie-Hamilton System} (LHS). 
\end{dfn}
For example, in the present example $\{\mathbb{R}_{y\neq 0}, \omega, \mathcal{H}=a_1(t)\mathcal{H}_1+a_2(t)\mathcal{H}_2+a_3(t)\mathcal{H}_3\}$ defines a Lie-Hamilton system.

 Now let us consider the following definition.
\begin{dfn}
A \textbf{Poisson Bi-vector} $\Lambda$ belonging to the exterior space $\bigwedge^2 T\mathcal{M}$ is defined as $G(p_1, p_2)\partial_{p_2}\wedge\partial_{p_1}$
\end{dfn}

Considering a Poisson bi-vector $\Lambda=p_1^2\partial_{p_2}\wedge \partial_{p_1}$, one can verify the relations
\begin{equation}
\chi_i=-\Lambda(d\mathcal{H}_i)\:\:\forall\:\:i=1, 2, 3.
\end{equation}

The above discussion justifies the existence of Lie-Hamilton system induced by Poisson bi-vector ($\Lambda$) and related to QHJ equation. The relevant Lie Algebra is formed by the following definition of Poisson Bracket
\begin{equation}
\{U, V\}_\Lambda=\Lambda (dU, dV)=p_1^2[\partial_{p_2}dU \partial_{p_1}dV-\partial_{p_2}dV \partial_{p_1}dU]
\end{equation}
It is straightforward to verify

\begin{eqnarray*}
\{\mathcal{H}_1, \mathcal{H}_2\}_\Lambda=-\mathcal{H}_1, \:\:\{\mathcal{H}_2, \mathcal{H}_3\}_\Lambda=-\mathcal{H}_3, \:\:\{\mathcal{H}_1, \mathcal{H}_3\}_\Lambda=-2\mathcal{H}_2
\end{eqnarray*}
which is also a Lie algebra isomorphic to $\mathfrak{sl}(2, \mathbb{R})$. We shall resume to this kind of PB in the context of \textbf{$x$-dependent Lie Integral}.

\section{Effective Mass, Potential Function,  Lie Symmetry and Lie Integral}
The following discussion considers the role of position dependent functions in the expression of a typical Lie symmetry for a VGLA isomorphic to $\mathfrak{sl}(2, \mathbb{R})$.
\begin{dfn}
Given a Lie system 
\begin{equation}
\frac{dp_i}{dx}=X_i(x, p_1, p_2)\:\: :\:\:i=1, 2 
\end{equation}

accompanied by an autonomization $\widetilde{X}=\partial_x+X(x, p_1, p_2)=\partial_x+\sum_{j=1}^3a_j(x)\chi_j$ of $X(x, p_1, p_2)$, an operator $Y=\lambda_0(x)\partial_x+\sum_{j=1}^3\lambda_j(x)\chi_j$
is said to represent a \textbf{symmetry operator} iff
\begin{equation}\label{symm}
[Y, \widetilde{X}]=\lambda(x)\widetilde{X}
\end{equation}
\end{dfn}

\begin{prop}
Let $\{\chi_j : j=1, 2, 3\}$ be the generators of $\mathfrak{sl}(2, \mathbb{R})$ algebra with the following commutation relations

\begin{equation}
[\chi_1, \chi_2]=\chi_1,\:\:\:[\chi_2, \chi_3]=\chi_3,\:\:\:[\chi_1, \chi_3]=2\chi_2
\end{equation}

Equation-\ref{symm} holds provided
\begin{eqnarray}\label{symm1}
{(\rm{a})}\:\:\frac{d\lambda_0}{dx}=a_0=-\lambda(x)\nonumber\\
{(\rm{b})}\:\:\frac{d\lambda_1}{dx}=\lambda_0\frac{da_1}{dx}+\lambda_1a_2-\lambda_2a_1+a_0a_1\nonumber\\
{(\rm{c})}\:\:\frac{d\lambda_2}{dx}=\lambda_0\frac{da_2}{dx}+2(\lambda_1a_3-\lambda_3a_1)+a_0a_2\nonumber\\
{(\rm{d})}\:\:\frac{d\lambda_3}{dx}=\lambda_0\frac{da_3}{dx}+\lambda_2a_3-\lambda_3a_2+a_0a_3
\end{eqnarray}
where $\lambda=-a_0$.
\end{prop}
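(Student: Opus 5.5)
The plan is to compute the commutator $[Y,\widetilde{X}]$ directly, expand it in the basis $\{\partial_x,\chi_1,\chi_2,\chi_3\}$, and compare coefficients with $\lambda(x)\widetilde{X}$. Everything rests on the fact that the vector fields $\chi_j$ depend only on $(p_1,p_2)$ and not on $x$. Hence $[\partial_x,\chi_j]=0$, and for any function $f(x)$ one has $[\partial_x,f\chi_j]=f'\chi_j$ and $[\chi_j,f]=0$.

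First I will split the bracket by bilinearity:
\begin{eqnarray*}
[Y,\widetilde{X}]=[\lambda_0\partial_x,\partial_x]+\sum_j[\lambda_0\partial_x,a_j\chi_j]+\sum_j[\lambda_j\chi_j,\partial_x]+\sum_{i,j}[\lambda_i\chi_i,a_j\chi_j].
\end{eqnarray*}
The four pieces evaluate as follows.
\begin{itemize}
\item The first term is $-\lambda_0'\partial_x$.
\item The second is $\lambda_0\sum_j a_j'\chi_j$.
\item The third is $-\sum_j\lambda_j'\chi_j$.
\item The last is $\sum_{i,j}\lambda_i a_j[\chi_i,\chi_j]$, because the coefficients are functions of $x$ alone and the $\chi$'s do not differentiate in $x$.
\end{itemize}

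Next I will reduce the last sum using the $\mathfrak{sl}(2,\mathbb{R})$ relations $[\chi_1,\chi_2]=\chi_1$, $[\chi_2,\chi_3]=\chi_3$, $[\chi_1,\chi_3]=2\chi_2$, together with antisymmetry. Collecting terms, the coefficient of $\chi_1$ is $\lambda_1a_2-\lambda_2a_1$, that of $\chi_2$ is $2(\lambda_1a_3-\lambda_3a_1)$, and that of $\chi_3$ is $\lambda_2a_3-\lambda_3a_2$. This is the only place where care with signs and index pairs is needed, so it is the step I expect to be the (mild) obstacle.

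Finally I will equate the result with $\lambda\widetilde{X}=\lambda\partial_x+\sum_j\lambda a_j\chi_j$. Since $\partial_x,\chi_1,\chi_2,\chi_3$ are linearly independent over functions on the extended space, coefficients may be compared term by term. The $\partial_x$ component gives $-\lambda_0'=\lambda$, which with the notation $a_0:=\lambda_0'$ is item (a), $\lambda=-a_0$. The $\chi_j$ components give
\begin{eqnarray*}
\lambda_j'=\lambda_0a_j'+(\mbox{bracket coefficient of }\chi_j)-\lambda a_j .
\end{eqnarray*}
Substituting $\lambda=-a_0$ turns the term $-\lambda a_j$ into $+a_0a_j$, which yields (b), (c) and (d). Reading the argument backwards shows that these conditions also suffice for equation-\ref{symm}.
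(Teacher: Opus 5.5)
Your argument is correct and is essentially the paper's own proof: you expand $[Y,\widetilde{X}]$ directly in $\partial_x,\chi_1,\chi_2,\chi_3$ and compare coefficients with $\lambda\widetilde{X}$, and your reduction of $\sum_{i,j}\lambda_i a_j[\chi_i,\chi_j]$ matches the paper's terms $\lambda_1(a_2\chi_1+2a_3\chi_2)-\lambda_2(a_1\chi_1-a_3\chi_3)-\lambda_3(2a_1\chi_2+a_2\chi_3)$. One justification needs correcting: the fields are \emph{not} linearly independent over functions on the extended space, since $(p_1^2+p_2^2)\chi_1-2p_2\chi_2+\chi_3=0$; the comparison is still valid because every coefficient depends on $x$ alone, so $\mathbb{R}$-linear independence of the $\chi_j$ (the property the paper invokes) suffices, and the sufficiency direction that the statement asserts needs no independence at all.
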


\textbf{Proof}: Directly calculating the commutation relation we get
\begin{eqnarray}
[Y, \tilde{X}]=[-(\partial_x\lambda_0)\partial_x+\lambda_0\sum_{j=1}^3(\partial_xa_j)\chi_j-\sum_{j=1}^3(\partial_x\lambda_j)\chi_j\nonumber\\
+\lambda_1(a_2\chi_1+2a_3\chi_2)-\lambda_2(a_1\chi_1-a_3\chi_3)\nonumber\\
-\lambda_3(2a_1\chi_2+a_2\chi_3)-\lambda(\partial_x+\sum_{j=1}^3a_j\chi_j)]\nonumber\\
+\lambda(\partial_x+\sum_{j=1}^3a_j\chi_j)
\end{eqnarray}
Since, $\{\chi_j : j=1, 2, 3\}$ are linearly independent for being generators of the $\mathfrak{sl}(2, \mathbb{R})$ algebra the above conditions follow (all partial derivatives have been replaced by ordinary derivatives for obvious reason).$\:\:\:\:\square$ 

\subsection{\textbf{Calculation of symmetry operator $Y$}}
First, we attempt to find out, for \textbf{Case-II}, the possible expressions of $\{\lambda_j : j=1, 2, 3\}$ from equation-\ref{symm1} by assuming $\lambda_0=a_0$. This immediately gives $\lambda_0=Ae^x$, $A$ being a constant. With this choice of $\lambda_0$ and possible substitutions of $\{a_1, a_2, a_3\}$ from equation-\ref{coeff1}, equation-\ref{symm1}((b) and (d)) can be rewritten as
\begin{eqnarray}
\frac{d}{dx}\frac{\lambda_1}{M}=\frac{1}{M}\frac{d}{dx}(Ae^xM(E-V_{\rm{eff}}))-\lambda_2(E-V_{\rm{eff}})\nonumber\\
\frac{d\lambda_3}{dx}=A e^x+\lambda_2-\lambda_3\frac{M^\prime}{M}
\end{eqnarray}
leading to
\begin{eqnarray}\label{coeff2}
{(\rm{a})}\:\:\lambda_1(x)\nonumber\\
=M(x)\int^x \left[\frac{1}{M(\xi)}\frac{d}{d\xi}(Ae^{\xi}M(\xi)(E-V_{\rm{eff}}(\xi)))-
\lambda_2(\xi)(E-V_{\rm{eff}}(\xi))\right]\nonumber\\
{(\rm{b})}\:\:\lambda_3(x)=\frac{1}{M(x)}\int^x M(\xi)(Ae^{\xi}+\lambda_2(\xi))
\end{eqnarray}
Using the above results in equation-\ref{symm1}(c) we get the following integro-differential equation of $\lambda_2$
\begin{eqnarray}\label{coeff3}
\frac{d\lambda_2}{dx}=\frac{d}{dx}\left(Ae^x\frac{M^\prime(x)}{M(x)}\right)\nonumber\\
+2M(x)\int^x \left[\frac{1}{M(\xi)}\frac{d}{d\xi}(Ae^{\xi}M(\xi)(E-V_{\rm{eff}}(\xi)))-\lambda_2(E-V_{\rm{eff}}(\xi))\right]\nonumber\\
-2(E-V_{\rm{eff}}(x))\int^x M(\xi)(Ae^{\xi}+\lambda_2(\xi))
\end{eqnarray}
or an integral equation
\begin{eqnarray}\label{coeff4}
\lambda_2=Ae^x\frac{M^\prime(x)}{M(x)}\nonumber\\
+\int^x\left[2M(\xi)\int^\xi \left\{\frac{1}{M(\eta)}\frac{d}{d\eta}(Ae^\eta M(\eta)(E-V_{\rm{eff}}(\eta)))-\lambda_2(E-V_{\rm{eff}}(\eta))\right\}\right]\nonumber\\
-\int^x\left[ 2(E-V_{\rm{eff}}(\xi))\int^{\xi} M(\eta)(Ae^\eta+\lambda_2(\eta))\right]
\end{eqnarray}
Equation-\ref{coeff4} can be solved out for $\lambda_2$ for specific expressions of effective mass $M$ and potential function $V_{\rm{eff}}(x)$ while the solutions to equations-\ref{coeff2}((a), (b)) can be found out by quadrature after putting the expression of $\lambda_2$.

\begin{rem}
It is well known that depending upon the potential function especially those responsible for the band structure the effective mass $M$ can be positive, negative or even zero. It is interesting to see that the present symmetry operator $Y$ takes up different expressions depending upon the sign of the effective mass and shows singularity at $M=0$.
\end{rem}

\textbf{Case-II} provides the motivation to calculate symmetry operators for the other two cases. For \textbf{Case-I} retaining $\lambda_0=Ae^x$ and employing the same procedure as above one can write
\begin{eqnarray}
\lambda_1=g(x)\int\frac{1}{g(x)}\left[Ae^x\left\{\frac{W(\sigma, \alpha)-f(x)+\sigma^2}{\alpha\delta}\right\}-\frac{\lambda_2}{\alpha}\right]dx\nonumber\\
\lambda_3=\frac{1}{g(x)}\int g(x)\left[\frac{Ae^x+\lambda_2}{\alpha}\right]dx\nonumber\\
\lambda_2=Ae^x\frac{g^\prime}{g}+2\int\left[\frac{\lambda_1}{\alpha}-\lambda_3\left\{\frac{W(\sigma, \alpha)-f(x)+\sigma^2}{\alpha\delta}\right\}-\frac{\lambda_2}{\alpha}\right]dx
\end{eqnarray}
where, $\frac{g^\prime(x)}{g(x)}=a_2$.

Similar calculation may be repeated for \textbf{Case-III} as well.

\subsection{\textbf{x-dependent Lie Integral}}\label{li}
In this section we  consider \textbf{Lie Integral} for each of the above system under specific choices of relevant functions and constants.
\begin{dfn}
Given a Lie system defined over a manifold $\mathcal{N}$ of dimension $n$ and an LHS $\{\mathcal{N}, \Lambda, \mathcal{H}\}$ a \textbf{Lie Integral} is the solution of the equation
\begin{eqnarray}
\frac{d\Upsilon}{dx}=\{\mathcal{H}, \Upsilon\}_\Lambda
\end{eqnarray}
\end{dfn}
where, for our present purpose $\Upsilon=\sum_{j=1}^3\Upsilon_j(x)\mathcal{H}_j$ and $\mathcal{H}=\sum_{j=1}^3a_j(x)\mathcal{H}_j$ The above equation is known as \textbf{Euler Equation} of the respective Lie Algebra. A constant or so called conserved quantity implies $\{\mathcal{H}, \Upsilon\}_\Lambda=0$.

In the present situation Euler equation gives
\begin{equation}\label{triplet}
\frac{d\Upsilon_1}{dx}=a_2\Upsilon_1-a_1\Upsilon_2,\:\:\:\frac{d\Upsilon_2}{dx}=2(a_3\Upsilon_1-a_1\Upsilon_3),\:\:\:\frac{d\Upsilon_3}{dx}=a_3\Upsilon_2-a_2\Upsilon_3
\end{equation}
Let us consider $\Upsilon_2=0$ in each of the cases and try to solve the triplet of equations-\ref{triplet} for convenient choices of various quantities. 

\textbullet$\:\:\:${ \textbf{Case-I} : Choosing $\alpha=\delta=1$ and $\sigma\dot{=}\sigma(x)$, we get, 
\begin{equation}
a_1=\sigma^\prime-2m(V(x)-E)+\sigma^2\:\:{\rm{and}}\:\:a_2=-2\sigma(x)
\end{equation}
giving 
\begin{equation}
\Upsilon_1=C_-e^{-2\int\sigma(x)dx} \:\:{\rm{and}}\:\:\Upsilon_3=C_+e^{2\int\sigma(x)dx}
\end{equation}
where, $C_{\pm}$ are integration constants.
Hence, the Lie integral becomes

\begin{eqnarray}
\Upsilon=\Upsilon_1\mathcal{H}_1 +\Upsilon_3\mathcal{H}_3=-C_-\frac{e^{-2\int\sigma(x)}}{p_1}-C_+e^{2\int\sigma(x)}\frac{p_1^2+p_2^2}{p_1}.
\end{eqnarray}

\vspace{0.5cm}

\textbullet$\:\:\:$\textbf{Case-II} : 
Similar method can be repeated  with $a_1=M(x)(E-V_{\rm{eff}}(x))$ and $a_2=\frac{M^\prime}{M}$. We get
$\Upsilon_1=B_-M(x)$ and ${\Upsilon_3^{-1}}=\frac{B_+}{M(x)}$. Hence,  the Lie Integral becomes  
\begin{equation}
\Upsilon=B_-M(x)\mathcal{H}_1+B_+\frac{1}{M(x)}\mathcal{H}_3=-B_-\frac{M(x)}{p_1}-B_+\frac{1}{M(x)}\frac{p_1^2+p_2^2}{p_1}
\end{equation}
 
 \vspace{0.5cm}

\textbullet $\:\:\:\:$ \textbf{Case-III} : Choosing $\nu_1=\nu_2$ and $\nu_3=\nu_4$ we get $k_1=0$. This leads to $\Upsilon_1=K_-\alpha_1^{-2}$ and $\Upsilon_3=K_+\alpha_1^2$. Correspondingly the Lie Integral becomes $\Upsilon=-K_-\frac{1}{\alpha_1^2p_1}-K_+\alpha_1^2\frac{p_1^2+p_2^2}{p_1}$

\section{Connection between Lie Integral and QHJ formalism}
It is to be brought into attention that validity of the construction of Lie integral
in subsection-\ref{li} demands a constraint relation $\Upsilon_2=0$ i. e.; $a_3\Upsilon_1-a_1\Upsilon_3=0$ . For \textbf{Case-I} such a restriction implies the following integro-differential equation of $\sigma(x)$.

\begin{eqnarray}
\sigma^\prime-2m(V{(x)}-E)+\sigma^2=C_0e^{-4\int\sigma(x)dx}\:\:{\rm{where}}\:\:C_0=\frac{C_-}{C_+}
\end{eqnarray}
For bound state problem $E$ is a number and $V(x)$ is a given function of $x$. The choice of integration constant may be equated with $2mE$, where $E$ can be obtained via QHJ method or just claiming it to be a constant. For example, if $\sigma(x)$ is small, $C_0e^{-4\int\sigma(x)}\approx C_0-4C_0\int\sigma(x)dx$ and the choice is $C_0=2mE$. The final calculation boils down to the equation
\begin{equation}\label{sigma}
\sigma^\prime-2mV{(x)}+\sigma^2=-4C_0\int\sigma(x)dx
\end{equation}
This means the term $\sigma(x)$  responsible for modifying the QMF can be approximated as the solution of equation-\ref{sigma}.

For \textbf{Case-II}, the same constraint relation becomes
\begin{eqnarray}
B_-M-B_+\frac{a_1}{M}=0\nonumber\\
B_-M=B_+(E-V_{\rm{eff}})\nonumber\\
=B_+E-B_+ reponsible for modifying\nonumber\\
B_0M(x)=E-V_{\rm{eff}}
\end{eqnarray}
Choosing $b+1=0$ and $a=-c$,
\begin{eqnarray}\label{varimass}
B_0M(x)=E-\left[V(x)+a^2\frac{{M^\prime}^2}{M^3}\right]\nonumber\\
B_0M^\prime=-V^\prime(x)-a^2\frac{d}{dx}\frac{{M^\prime}^2}{M^3}
\end{eqnarray}
This means the form of Lie Integral we have mentioned earlier demand the variable mass $M$ to satisfy the differential equation-\ref{varimass}.
  
For \textbf{Case-III}, similar consideration leads to
\begin{eqnarray}
E-k_2(x)=\frac{\tilde{\nu}_0}{\alpha_1^2(x)}\nonumber\\
k_2^\prime-\frac{2\tilde{\nu}_0\alpha_1^\prime}{\alpha_1^3(x)}=0
\end{eqnarray}
It is to be noted that every time the so called consistency condition involve $E$ as a linear term. Converting the condition to differential equation makes it free of energy $E$ which takes up different values for different excited states. The advantage of choosing differential equation over an equation linear in energy is to set proper integration constant to match it out with the energy value.

\section*{Conclusion} The above discussion deals with the existence of a \textbf{Lie-Hamilton System} for \textbf{Quantum Hamilton Jacobi (QHJ)} equation in one dimension in various circumstances like conventional potential problem to variable mass Schr\"odinger problem and even to the case of non-Hermitian Swanson Hamiltonian. The prime objectives of quantum mechanics like finding out eigenvalues or eigen states have not been envisaged here, rather we assume their existence a priori for a bound state problem. We, in turn, explore the QHJ equation from the perspective of differential equation (here \textbf{Cayley-Klein Riccati equation})to glean out signatures of Lie-Hamilton System, signatures which are relevant for geometric understanding of the problem. The differential geometric features like symplectic form, Poisson bi-vector, Lie derivative, Lie Integral, Lie symmetry have been our concern. It has come out at the end that sometimes, if not most of the cases differential geometric features seem to be more general than the precise classification of the physical problem to be supposedly held categories like Classical or Quantum Mechanical. This view is in consonance with various efforts ever made to view Classical Mechanics as the limiting case of Quantum Formalisms. Similar consideration for Lie-Dirac system is under preparation. Extension of the method employed here to higher dimension, for general non-Hermitian potential and for particle with spin may be suggested as some interesting areas of further investigation.

\section*{Acknowledgement} The author wishes to thank his colleague Dr B. Mal for her assistance to prepare the latex version of the manuscript.

\section*{ORCID iDs}
Arindam Chakraborty https://orcid.org/0000-0002-3414-3785

\end{document}